\newtheorem{theorem}{Theorem}
\newtheorem{lemma}{Lemma}
\theoremstyle{definition}
\newtheorem{definition}{Definition}
\newcommand{\prob}{\mathbb{P}}
\newcommand{\mean}{\mathbb{E}}
\newcommand{\ind}{\mathds{1}}
\newcommand{\berndist}{\textsf{Bern}}
\newcommand*{\secref}[1]{Section~\ref{#1}}
\newcounter{line}
\newcommand{\gk}{\arabic{line})\stepcounter{line}}
\newenvironment{pseudocode}[1]
    {\par\vspace*{9pt}    \setcounter{line}{1}
    \begin{tabular}{rp{0.5\textwidth}}
        \toprule
          \multicolumn{2}{l}{\textbf{#1}} \\
          \midrule 
    }
    { 
    \bottomrule
    \end{tabular}
    \vspace*{6pt}
    }
\begin{document}

\title{Designing perfect simulation algorithms using local correctness}

\author{Mark Huber  \\ {\tt mhuber@cmc.edu}}

\maketitle

\pagenumbering{gobble}

\begin{abstract}
Consider a randomized algorithm that draws samples exactly from a
distribution using recursion. Such an algorithm is called a perfect
simulation, and here a variety of methods for building this type of
algorithm are shown to derive from the same result: the Fundamental
Theorem of Perfect Simulation (FTPS). The FTPS gives two necessary and
sufficient conditions for the output of a recursive probabilistic
algorithm to come exactly from the desired distribution. First, the
algorithm must terminate with probability 1. Second, the algorithm must
be locally correct, which means that if the recursive calls in the
original algorithm are replaced by oracles that draw from the desired
distribution, then this new algorithm can be proven to be correct. While
it is usually straightforward to verify these conditions, they are
surprisingly powerful, giving the correctness of Acceptance/Rejection,
Coupling from the Past, the Randomness Recycler, Read-once CFTP, Partial
Rejection Sampling, Partially Recursive Acceptance Rejection, and
various Bernoulli Factories. We illustrate the use of this algorithm by
building a new Bernoulli Factory for linear functions that is 41\%
faster than the previous method.
\end{abstract}

\section{Introduction}\label{introduction}

The ability to sample from complicated unnormalized densities is very
beneficial in building \((\epsilon, \delta)\)-randomized approximation
schemes for many \#P complete problems. Examples include approximation
algorithms for the volume of a convex body~\cite{dyerfk1991}, the
permanent of 0-1 matrices~\cite{jerrums1989, huber2006a}, the
normalizing constant of the Ising model~\cite{jerrums1993}, the number
of solutions to a Disjunctive Normal Form logical statement
\cite{karpl1983}, the number of \(s-t\) paths in a network
\cite{provanb1983} and many more.

\begin{definition}
For a nonnegative measurable function $h$, say that $X$ has \emph{unnormalized density $h$ with respect to measure $\mu$} (write $X \sim h$) if 
\[
0 < Z_h = \int_x h(x) \ d\mu < \infty.
\]
and for all measurable sets $A$,
\[
\prob(X \in A) = \frac{\int_{x \in A} h(x) \ d\mu}{Z_h}.
\]
\end{definition}

\pagenumbering{arabic}

For most such problems, exact computation of \(Z_h\) is a \#P complete
problem, as it is for all the examples listed in the first paragraph.
Hence randomized approximation algorithms are widely used to give
estimates of the solution rather than the exact answer.

Given a target distribution \(\pi\), a valuable tool in the construction
of algorithms for drawing a random variate \(X\) exactly from \(\pi\) is
recursion.

\begin{definition}
Say that \textbf{Alg}$(\alpha)$ is a \emph{probabilistic recursive scheme (PRS)} if the algorithm makes random choices, and in the course of running is allowed to call itself recursively, possibly with different values of the parameter $\alpha$.
\end{definition}

As an example, consider the following acceptance/rejection (AR) style
algorithm that draws uniformly from \(\{1,2,3,4,5\}\).

\begin{pseudocode}{AR1$(\alpha)$}
\gk & Repeat \\
\gk & \quad Draw $X$ uniformly from $\{1,2,\ldots,10\}$ \\
\gk & Until $X \in \{1,2,3,4,5\}$ \\
\gk & Return $X$ \\
\end{pseudocode}

\textbf{AR1} uses a repeat loop, but this is equivalent to the following
recursive form.

\begin{pseudocode}{AR2$(\alpha)$}
\gk & Draw $X$ uniformly from $\{1,2,\ldots,10\}$ \\
\gk & If $X \in \{1,2,3,4,5\}$ \\
\gk & \quad Return $X$ \\
\gk & Else \\
\gk & \quad $X \leftarrow \textbf{AR2}(\alpha)$ \\
\gk & \quad Return $X$ \\
\end{pseudocode}

In practice, code is typically faster using the repeat loop formulation
rather than the recursive version. However, from an execution point of
view \textbf{AR1} and \textbf{AR2} are the same algorithm. Hence we will
say that both \textbf{AR1} and \textbf{AR2} are examples of a PRS.

\begin{definition}
Given a parameter set $\mathcal{P}$ and family of target distributions $\{\pi_\alpha:\alpha \in \mathcal{P}\}$, a PRS \textbf{Alg}$(\alpha)$ is a \emph{perfect simulation} or \emph{perfect sampling} algorithm if for all $\alpha$, the algorithm terminates in finite time with probability 1 and returns a draw with distribution equal to $\pi_\alpha$.
\end{definition}

\begin{lemma}
\textbf{AR2} is a perfect simulation algorithm.
\end{lemma}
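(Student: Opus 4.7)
The plan is to verify the two requirements in the definition of perfect simulation: termination in finite time with probability 1, and the output having the desired distribution (uniform on $\{1,2,3,4,5\}$). I will handle termination first because the correctness argument, which proceeds by a self-consistency equation for the output probabilities, implicitly relies on termination to be meaningful.

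For termination, I would argue as follows. Each time line~1 of \textbf{AR2} is executed, $X$ is drawn uniformly from $\{1,\ldots,10\}$, so $\prob(X \in \{1,2,3,4,5\}) = 1/2$, and the recursive call in line~5 uses fresh randomness independent of previous draws. Unrolling the recursion, the number $N$ of line~1 executions before a return statement is geometric with parameter $1/2$, hence $\prob(N < \infty) = 1$ and $\mean[N] = 2$. Thus \textbf{AR2} terminates in finite time almost surely.

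For the distribution, let $p_k = \prob(\textbf{AR2}(\alpha) = k)$. I would condition on the value of the first draw $X$ of line~1. For any $k \in \{1,\ldots,5\}$, with probability $1/10$ the first draw equals $k$ and we return it; with probability $1/2$ the first draw lies outside $\{1,\ldots,5\}$ and we return the output of a recursive call which is distributed identically to $\textbf{AR2}(\alpha)$. This yields the self-consistency equation
\[
p_k = \frac{1}{10} + \frac{1}{2} p_k,
\]
so $p_k = 1/5$. For $k \notin \{1,\ldots,5\}$ the analogous equation is $p_k = 0 + (1/2) p_k$, giving $p_k = 0$. Note that the step ``the recursive call has the same distribution as $\textbf{AR2}(\alpha)$'' is only meaningful because termination has already been established, which is why that step comes first.

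There is no real obstacle here; this lemma plays the role of a warm-up illustrating the general pattern (termination plus local correctness on the first draw) that the paper will later abstract as the Fundamental Theorem of Perfect Simulation. The one point to be careful about is not to invoke that theorem prematurely, so I would keep the correctness argument purely in terms of the explicit self-consistency equation above rather than appealing to local correctness as a black box.
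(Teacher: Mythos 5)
Your proof is correct and follows essentially the same route as the paper: establish termination via the constant per-iteration acceptance probability, then derive the uniform output distribution from a self-consistency equation exploiting that the recursive call is identical in law to the original. In fact your constants ($1/10$ and $1/2$) match the algorithm as stated, whereas the paper's proof uses $1/6$ and $5/6$ (apparently a remnant of a version drawing from $\{1,\ldots,6\}$); both versions of the equation yield $\prob(X=i)=1/5$.
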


\begin{proof}
Here $\pi_\alpha$ is the uniform distribution over $\{1,2,3,4,5\}$ for all $\alpha$.  At each step there is a $5/6$ chance the algorithm will terminate without a recursive call, so the algorithm terminates in finite time with probability 1.  Also, for $i \in \{1,\ldots,5\}$, the recursive call is exactly the same as the original call, so the probability the output is $i$ equals the chance that the initial draw of $X$ is $i$ plus the chance that the initial draw of $X$ is 6 times the chances that the recursive call returns $i$.  That is,
\begin{align*}
\prob(X = i) &= (1/6) + (1/6)\prob(X = i),
\end{align*}
which yields $\prob(X = i) = 1/5$ for all $i \in \{1,\ldots,5\}$.
\end{proof}

Here the proof was easy because the recursive call was exactly the same
as the original call. But now consider an \emph{adaptive} AR algorithm
that modifies the random choice based on the input parameter. Suppose
that the input parameter \(\alpha\) is a positive integer.

\begin{pseudocode}{AR3$(\alpha)$}
\gk & Draw $X$ uniformly from $\{1,2,\ldots,\max(5, \alpha)\}$ \\
\gk & If $X \in \{1,2,3,4,5\}$ \\
\gk & \quad Return $X$ \\
\gk & Else \\
\gk & \quad $Y \leftarrow \textbf{AR3}(X)$ \\
\gk & \quad Return $Y$ \\
\end{pseudocode}

Intuitively, this should still return output uniformly distributed over
\(\{1,2,3,4,5\}\). However, this minor change has made the analysis much
more difficult, as the technique used in the last proof no longer
applies since there is no reason initially to believe (for instance)
that \(\textbf{Alg}(100)\) and \(\textbf{Alg}(10)\) have the same output
distribution.

The purpose of this paper is to present a new version of the
\emph{Fundamental Theorem of Perfect Simulation (FTPS)} that gives two
necessary and sufficient conditions for when a PRS is actually a perfect
simulation algorithm. These conditions are typically both easy to
verify. The first condition is that the algorithm must terminate in
finite time with probability 1. The second condition is called
\emph{local correctness}.

\begin{definition}
In a call $\textbf{Alg}(\alpha)$ to a PRS, suppose that any recursive calls of the form $\textbf{Alg}(\beta)$ within the algorithm are replaced with oracles that draw exactly from their target distribution $\pi_\beta$.  If the output of the resulting algorithm with oracles can be shown to come from $\pi_\alpha$ for all $\alpha$, then the algorithm is \emph{locally correct} with respect to $\{\pi_\alpha:\alpha \in \mathcal{P}\}$.
\end{definition}

When these two conditions hold for all input parameters, the output of
the PRS will be exactly from the desired result.

\begin{theorem}[Fundamental Theorem of Perfect Simulation]
Suppose $\textbf{Alg}(\alpha)$ is a PRS that satisfies the following.
\begin{enumerate}
  \item It terminates in finite time with probability 1 for all $\alpha$.
  \item It is locally correct with respect to $\{\pi_\alpha:\alpha \in \mathcal{P}\}$.
\end{enumerate}
Then the output of the algorithm is an exact draw from $\pi_\alpha$.
\end{theorem}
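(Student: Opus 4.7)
My plan is to prove the theorem by an inductive coupling argument based on recursion depth. For each integer $k \geq 0$, I introduce an auxiliary algorithm $\textbf{Alg}_k(\alpha)$: it executes $\textbf{Alg}(\alpha)$ normally, except that any recursive subcall whose invocation occurs at nesting depth strictly greater than $k$ is intercepted and answered by a single query to an oracle returning an independent exact draw from the appropriate $\pi_\beta$. Thus $\textbf{Alg}_0$ just replaces the top call by an oracle, $\textbf{Alg}_1$ is precisely the algorithm considered in the local-correctness hypothesis, and intuitively $\textbf{Alg}_k \to \textbf{Alg}$ as $k \to \infty$.

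First I would prove by induction on $k \geq 1$ that $\textbf{Alg}_k(\alpha) \sim \pi_\alpha$ for every parameter $\alpha$. The base case $k=1$ is exactly the local-correctness hypothesis. For the inductive step, observe that $\textbf{Alg}_{k+1}(\alpha)$ executes the top-level code of $\textbf{Alg}$ and answers each recursive call $\textbf{Alg}(\beta)$ by running $\textbf{Alg}_k(\beta)$ on fresh independent randomness; by the inductive hypothesis each such run has distribution exactly $\pi_\beta$, so the top-level computation is distributionally identical to $\textbf{Alg}_1(\alpha)$, which by local correctness outputs $\pi_\alpha$.

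Next I would couple $\textbf{Alg}(\alpha)$ with every $\textbf{Alg}_k(\alpha)$ on a common probability space by letting them share all the random bits used at nesting depths up to $k$, while the oracle draws needed by $\textbf{Alg}_k$ sit on an independent reservoir. Let $D$ denote the maximum recursion depth reached by $\textbf{Alg}(\alpha)$. On the event $\{D \leq k\}$ no call ever reaches a depth at which the two algorithms diverge, so their outputs agree pointwise. By condition (1), $D$ is finite almost surely, and therefore $\prob(D \leq k) \to 1$. For any measurable set $A$,
\[
\prob(\textbf{Alg}(\alpha) \in A) = \lim_{k \to \infty} \prob(\textbf{Alg}_k(\alpha) \in A) = \pi_\alpha(A),
\]
which is the desired equality of distributions.

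The step I expect to require the most care is formalizing ``recursion depth'' and the coupling cleanly, since the control flow of $\textbf{Alg}$ is data-dependent and a single run can spawn many recursive subcalls, possibly inside loops or conditionals. I plan to handle this by representing an execution as a random tree of invocations whose nodes carry their own independent streams of random bits; the algorithm $\textbf{Alg}_k$ then corresponds to pruning this tree at depth $k$ and pasting in oracle samples at the cut nodes, after which the induction and the limit argument proceed without additional subtleties.
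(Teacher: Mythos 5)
Your proposal is correct and follows essentially the same route as the paper: define the depth-$k$ truncated algorithms with oracles at the cutoff, prove by induction (via local correctness) that each has output distribution $\pi_\alpha$, then couple with the untruncated algorithm and use almost-sure finiteness of the recursion depth to pass to the limit. The only cosmetic difference is that your induction argues top-down (subcalls are correct by hypothesis, then apply local correctness at the root) while the paper argues bottom-up (the deepest level is locally correct, so it may be replaced by an oracle, reducing $\textbf{Alg}_{n+1}$ to $\textbf{Alg}_n$); these are the same argument.
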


This simple condition is surprisingly powerful. Consider the example
from earlier.

\begin{lemma}
$\textbf{AR3}(\alpha)$ is a perfect simulation algorithm whose output is uniform over $\{1,2,3,4,5\}$ for all positive integers $\alpha$.
\end{lemma}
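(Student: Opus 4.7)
The plan is to invoke the Fundamental Theorem of Perfect Simulation, so I need to verify the two hypotheses: almost-sure finite termination and local correctness with respect to the family where every $\pi_\alpha$ is the uniform distribution on $\{1,2,3,4,5\}$.

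For termination, I would first observe that if $\alpha \leq 5$, then $\max(5,\alpha) = 5$, so the initial draw $X$ lies in $\{1,\ldots,5\}$ and the algorithm returns immediately. For $\alpha \geq 6$, recursion happens only when $X \in \{6,\ldots,\alpha\}$, and the recursive call uses parameter $X \leq \alpha$. Hence along any recursion path the parameter is bounded above by the initial $\alpha$, and at every level the probability of immediate termination is at least $5/\alpha > 0$. The depth of recursion is therefore stochastically dominated by a geometric random variable with success probability $5/\alpha$ and is a.s.\ finite.

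For local correctness, I would replace the recursive call $\textbf{AR3}(X)$ by an oracle returning a draw from $\pi_X$, i.e.\ uniform on $\{1,2,3,4,5\}$. For $\alpha \leq 5$ the oracle is never consulted and the output is the initial uniform draw, which is already $\pi_\alpha$. For $\alpha \geq 6$, for any $i \in \{1,\ldots,5\}$ the output equals $i$ either because the initial $X$ equals $i$ (probability $1/\alpha$) or because $X \in \{6,\ldots,\alpha\}$ and the oracle returns $i$ (probability $((\alpha-5)/\alpha)(1/5)$). A brief arithmetic check shows these sum to $1/5$, giving local correctness.

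I do not expect a serious obstacle: the termination bound is immediate once one notices the monotonicity of the parameter along the recursion, and the local correctness calculation is a one-line case split on whether $X \leq 5$. If anything, the subtle point worth stating explicitly is that the parameter cannot grow under recursion, since this is precisely what rules out the kind of unbounded blow-up that would otherwise make the geometric domination argument fail. With both conditions verified, the conclusion follows directly from the Fundamental Theorem of Perfect Simulation.
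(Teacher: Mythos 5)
Your proposal is correct and follows essentially the same approach as the paper: verify almost-sure termination via the observation that the parameter never increases along the recursion (so each level terminates with probability at least $5/\alpha$), then verify local correctness by the same case split and the computation $1/\alpha + (1 - 5/\alpha)(1/5) = 1/5$, and conclude with the FTPS. No gaps.
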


\begin{proof}
Here $\pi_\alpha$ is uniform over $\{1,2,3,4,5\}$ for all positive integers $\alpha$.  Since the parameter in any recursive call is at most the initial value $\alpha_0$ passed to the algorithm, the probability of termination is at least $5/\alpha_0$ and the algorithm will terminate in finite time with probability 1.

Then to show correctness using the FTPS, we assume that the recursive call at line 5 actually is a draw from the correct target distribution that is uniform over $\{1,\ldots,5\}$.  Let $\alpha_0$ be the initial parameter passed to the algorithm.  If $\alpha_0 \leq 5$ then line 1 just draws uniformly from $1$ to $5$, so assume $\alpha_0 > 5$.  Then for $i \in \{1,\ldots,5\}$, the chance that $Y = i$ is the chance that $X = i$ plus the chance that $X > 5$ times the chance that the oracle returns $Y = i$.  That is,
\[
\prob(Y = i) = \frac{1}{\alpha_0} + \left(1  - \frac{5}{\alpha_0}\right)\frac{1}{5} = \frac{1}{5}.
\]
Hence the algorithm is locally correct, and the FTPS gives that the overall algorithm is correct.
\end{proof}

The value of the FTPS is similar to the use of the Ergodic Theorem in
Markov chain Monte Carlo. Without going into too much detail, that
theorem states that a Markov chain that is irreducible, aperiodic, and
stationary with respect to \(\pi\) will have \(\pi\) as the limiting
distribution regardless of the starting state. Irreducibility and
aperiodicity is similar to our first requirement that the algorithm must
terminate with probability 1. Stationarity is similar to local
correctness in that it is usually not too difficult to verify (although
there are exceptions.)

The simplicity of the Ergodic Theorem has led to a multitude of Markov
chain Monte Carlo algorithms. In the same way, the simplicity of the
local correctness criterion encourages variety of design in perfect
simulation algorithms. In particular, what the FTPS does for us is
three-fold.

\begin{enumerate}
\def\labelenumi{\arabic{enumi}.}
\item
  It gives easy proofs of the most common perfect simulation protocols,
  such as Acceptance/Rejection (AR), Coupling from the Past (CFTP), and
  the Randomness Recycler (RR).
\item
  It allows us to build variants of these algorithms, such as adaptive
  AR, partially recursive AR, time-heterogeneous CFTP, and fractal time
  CFTP, without having to start over from scratch on the proofs. We can
  make these minor changes without worrying that a minor modification of
  the algorithm will change the resulting output.
\item
  It allows the building of more complicated algorithms that use
  recursion in more unusual ways, such as the first polynomial expected
  time Bernoulli Factory \cite{huber2016a}.
\end{enumerate}

The remainder of this paper is organized as follows. In the next section
we give the proof of the FTPS. In \secref{SEC:protocols} we employ the
result to give new proofs of many of the most important protocols in the
construction of perfect simulation algorithms. In particular we present
a new Bernoulli Factory algorithm whose proof of correctness is greatly
eased by using the FTPS. Finally, in \secref{SEC:models} we consider
ramifications of the model of computation that we are using.

\section{Proof of the FTPS}\label{SEC:protocols}

In order to prove the FTPS, we will consider a coupled sequence of
algorithms. Our initial call to \(\textbf{Alg}(\alpha)\) is allowed to
use recursion an unbounded number of times. Suppose that for each \(n\)
we construct a new algorithm \(\textbf{Alg}_n(\alpha)\) in the following
way. First we consider what \emph{level of recursion} we are at in our
algorithm.

\begin{definition}
An initial call to a PRS $\textbf{Alg}$ is said to be at \emph{level of recursion 0.}  For $i$ a positive integer, say the call to $\textbf{Alg}$ occurs at the \emph{level of recursion $i$} if it was called by a call that was at level of recursion $i - 1$.
\end{definition}

Then set up \(\textbf{Alg}_n(\alpha)\) as follows. If the level of
recursion \(\ell\) of a call to \(\textbf{Alg}_n(\alpha)\) is less than
\(n\), then the algorithm behaves exactly like \(\textbf{Alg}(\alpha)\).
However, if the the call is at level of recursion \(\ell = n\), the
algorithm replaces the recursive calls with oracles that generate draws
exactly from the target distribution.

So for example, we might call \(\textbf{AR3}(100)\) and have the
following outcome:

\begin{center}
\begin{tabular}{lllllll}
sample run:         & $\textbf{AR3}(100)$ & $\textbf{AR3}(67)$ & $\textbf{AR3}(11)$ & Outputs $X = 4$\\
level of recursion: & 0 & 1 & 2 &  
\end{tabular}
\end{center}

Throughout this section, we will use \(X\) to denote the output from
\(\textbf{Alg}(\alpha)\), and \(Y_n\) to denote the output of
\(\textbf{Alg}_n(\alpha)\). For an original call,
\(\textbf{Alg}(\alpha)\), let \(T\) denote the supremum of the levels of
recursion employed by the algorithm in the call. By definition then
\(Y_n = X\) for any \(n \geq T\).

In the sample run, since \(\textbf{Alg}(11)\) was the last call at level
of recursion 2, \(T = 2\), and \(X = 4 = Y_2 = Y_3 = \cdots\). The
outputs \(Y_0\) and \(Y_1\) used oracles at level of recursion 0 and 1
respectively to determine their value and so might have different values
than \(X\).

The first step in the proof of the FTPS is to use local correctness to
show that for every \(n\), \(Y_n\) has the correct distribution.

\begin{lemma}
For all nonnegative integers $n$, the output of $\textbf{Alg}_n(\alpha)$ (where $\textbf{Alg}$ is locally correct) is an exact draw from $\pi_\alpha$.
\end{lemma}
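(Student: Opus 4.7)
The natural plan is to prove the statement by induction on $n$.

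For the base case $n = 0$, the definition of $\textbf{Alg}_0(\alpha)$ says that any recursive call made at level of recursion $0$ is replaced by an oracle that draws from the target distribution. But \emph{every} recursive call in an initial invocation of $\textbf{Alg}_0(\alpha)$ is made at level of recursion $0$ (by the definition of level of recursion, since the initial call itself sits at level $0$). Hence $\textbf{Alg}_0(\alpha)$ is precisely the algorithm obtained from $\textbf{Alg}(\alpha)$ by replacing all of its recursive calls with oracles. Local correctness then gives that the output is distributed as $\pi_\alpha$.

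For the inductive step, assume $\textbf{Alg}_n(\beta)$ outputs a draw from $\pi_\beta$ for every parameter $\beta$, and consider $\textbf{Alg}_{n+1}(\alpha)$. The top-level call sits at level $0 < n+1$, so it executes exactly as $\textbf{Alg}(\alpha)$ would, and in particular it may perform recursive calls $\textbf{Alg}_{n+1}(\beta)$ at level of recursion $1$. The key observation is that such a subcall, viewed on its own, behaves identically to $\textbf{Alg}_n(\beta)$: in the subcall the level-counter is effectively shifted down by one, so recursive calls happen as in $\textbf{Alg}$ until they reach level $n+1$ overall, which is the same threshold as level $n$ inside the subcall. By the inductive hypothesis, each such subcall therefore returns an exact draw from $\pi_\beta$, and so these subcalls function as oracles from the standpoint of the top-level computation. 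Applying local correctness at the top level now gives that the output of $\textbf{Alg}_{n+1}(\alpha)$ is distributed as $\pi_\alpha$.

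The only step requiring any care is the shift argument in the inductive step: one needs to verify that ``running $\textbf{Alg}_{n+1}$ starting from level $1$'' really is indistinguishable from ``running $\textbf{Alg}_n$ starting from level $0$,'' which follows directly from the definition of $\textbf{Alg}_m$ depending on the level of recursion only through whether it has reached $m$. Everything else is bookkeeping, and no probabilistic argument beyond an appeal to local correctness at each inductive stage is needed.
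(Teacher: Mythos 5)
Your proof is correct, and it carries out the same overall plan as the paper (induction on $n$, with exactly one appeal to local correctness per inductive step), but the decomposition in the inductive step is reversed. You peel the recursion at the root: the level-$0$ call runs as $\textbf{Alg}(\alpha)$, each level-$1$ subcall is identified with $\textbf{Alg}_n(\beta)$ via the shift argument, the induction hypothesis (applied uniformly over all parameters $\beta$, which the statement of the lemma permits) turns those subcalls into oracles, and local correctness is then applied at level $0$. The paper instead peels the recursion at the leaves: it applies local correctness to the deepest calls, those at level $n+1$, whose own recursive calls are oracles by construction; this shows their outputs are exact draws, hence interchangeable with oracle outputs, so $Y_{n+1}$ has the same law as $Y_n$, and the induction hypothesis is invoked for the whole truncated algorithm $\textbf{Alg}_n(\alpha)$. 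Your route makes explicit the self-similarity fact that a level-$1$ subcall of $\textbf{Alg}_{n+1}$ is distributionally identical to a fresh run of $\textbf{Alg}_n$, which the paper never needs to state; the paper's route avoids that bookkeeping but instead relies on the (also briefly treated) observation that replacing a correctly distributed subcall output by an oracle draw leaves the law of the final output unchanged. Both arguments gloss the same conditioning point — that a subcall, given its (possibly random) parameter $\beta$, uses fresh randomness and so genuinely acts as an oracle for $\pi_\beta$ — so your treatment is at the same level of rigor as the paper's, and either version establishes the lemma.
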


\begin{proof}  When $n = 0$, this is just the definition of local correctness.  Suppose it holds for $\textbf{Alg}_n(\alpha)$, and consider $\textbf{Alg}_{n + 1}(\alpha)$.  

For the output of any call at level of recursion $n + 1$, any further recursive calls use oracles.  So local correctness guarantees that the output of level of recursion $n + 1$ comes exactly from the desired distribution.  So it is the same as if that output came from an oracle with the correct distribution. But that means $Y_{n+1}$ has the same distribution as $Y_n$.  That allows us to use the induction hypothesis to say that $Y_{n+1}$ must have the correct distribution, which completes the induction. 
\end{proof}

We are now ready to prove the FTPS.

\begin{proof}[Proof of the FTPS]
Let $A$ be any measurable set, and $n$ be any positive integer.  Then
\begin{align*}
\prob(X \in A) &= \prob(X \in A, T \leq n) + \prob(X \in A, T > n).
\end{align*}
If $T \leq n$, then $X = Y_n$, so 
\begin{align*}
\prob(X \in A) &= \prob(Y_n \in A, T \leq n) + \prob(X \in A, T > n) \\
&= \prob(Y_n \in A) - \prob(Y_n \in A, T > n) + \prob(X \in A, T > n).
\end{align*}
The probability that $Y_n \in A$ is the probability that a draw from $\pi_\alpha$ falls in $A$.  Denote this probability $\pi_\alpha(A)$.  Then
\begin{align*}
|\prob(X \in A) - \pi_\alpha(A)| &=
 |\prob(X \in A) - \prob(Y_n \in A)| \\
 &= |\prob(X \in A, T > n) - \prob(Y_n \in A, T > n)| \\
 &\leq \prob(T > n).
\end{align*}

Since we assumed that $T$ was finite with probability 1, as $n \rightarrow \infty$, $\prob(T > n) \rightarrow 0$, hence $\prob(X \in A) = \pi_\alpha(A)$, and we are done.
\end{proof}

\section{Perfect simulation protocols}\label{SEC:bf}

The two most common perfect simulation protocols are
acceptance/rejection (sometimes called rejection sampling) and Coupling
from the Past, so let us begin with those.

\subsection{Acceptance Rejection}\label{acceptance-rejection}

Consider the following acceptance/rejection algorithm for unnormalized
densities. Suppose that \(g \geq h\) are two unnormalized densities such
that it is easy to draw random variates from \(g\) and we wish to obtain
draws from \(h\). The following algorithm goes back to
\cite{vonneumann1951} but has been extended and used for applications
ranging from counting DNF satisfying assignments \cite{karpl1983} to
drawing exact instances of solutions to Stochastic Differential
Equations \cite{beskospr2006}.

\begin{pseudocode}{AR4$(g)$}
\gk & Draw $X$ using $g$ \\
\gk & Draw $U$ uniformly from $[0,1]$ \\
\gk & If $U < h(X)/g(X)$ \\
\gk & \quad Return $X$ \\
\gk & Else \\
\gk & \quad $Y \leftarrow \textbf{AR4}(g)$ \\
\gk & \quad Return $Y$ \\
\end{pseudocode}

\begin{lemma}
For $h$ and $g$ unnormalized densities with $g \geq h$, the output of \textbf{AR4} is $X \sim h$.
\end{lemma}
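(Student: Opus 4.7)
The plan is to apply the FTPS directly: verify that \textbf{AR4} terminates in finite time with probability 1, then verify local correctness with respect to the target family (here a single target, the distribution with unnormalized density $h$).

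For termination, I would compute the per-call acceptance probability. Using the density of $X$ under $g$, namely $g(x)/Z_g$, the chance that the test $U < h(X)/g(X)$ succeeds equals
\[
\int \frac{h(x)}{g(x)} \cdot \frac{g(x)}{Z_g}\, d\mu = \frac{Z_h}{Z_g}.
\]
Since $g \geq h$ and both are proper unnormalized densities, this ratio is a strictly positive constant $p \in (0,1]$, independent across levels of recursion. Therefore the number of levels before acceptance is stochastically dominated by a geometric random variable with parameter $p$, and so termination occurs in finite time with probability 1.

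For local correctness, I would replace the recursive call on line 6 by an oracle that returns a draw $Y \sim h$. Then for any measurable set $A$, I would decompose according to whether the first try is accepted:
\[
\prob(Y \in A) = \int_A \frac{h(x)}{g(x)} \cdot \frac{g(x)}{Z_g}\, d\mu + \left(1 - \frac{Z_h}{Z_g}\right)\cdot \frac{\int_A h(x)\, d\mu}{Z_h}.
\]
The first term simplifies to $\int_A h(x)\, d\mu / Z_g$, and combining the two terms yields exactly $\int_A h(x)\, d\mu / Z_h$, which is the probability under the target distribution. Hence the oracle-substituted algorithm outputs a draw from $h$, so \textbf{AR4} is locally correct.

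With both hypotheses of the FTPS verified, the theorem immediately gives that the output of \textbf{AR4} is distributed according to $h$. The main obstacle, such as it is, lies only in the routine bookkeeping of the acceptance-probability integral; the conceptual difficulty that plagued the proof of \textbf{AR3} — that the recursive call has a potentially different distribution from the outer call — does not arise here, because the parameter $g$ never changes across levels, and in any case the FTPS handles it automatically through the local-correctness abstraction.
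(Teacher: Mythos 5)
Your proposal is correct and follows essentially the same route as the paper: compute the per-call acceptance probability $Z_h/Z_g > 0$ for termination, then verify local correctness by replacing the recursive call with an oracle and decomposing over accept/reject, with the only cosmetic difference that you substitute the explicit value $\int_A h\, d\mu / Z_h$ for the oracle's probability where the paper keeps it abstract as $\prob(Y \in A)$. The lone blemish is notational: on the left of your decomposition you write $\prob(Y \in A)$ for the output of the oracle-substituted algorithm while $Y$ already denotes the oracle draw, so a separate symbol (the paper uses $W$) would be cleaner.
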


\begin{proof}
By the FTPS we need termination in finite time and local correctness.  The chance of terminating at each step is 
\begin{align*}
\prob(U \leq h(X)/g(X)) &= \mean[\ind(U \leq h(X)/g(X))] \\
&= \mean[\mean[\ind(U \leq h(X)/g(X))|X]] \\
&= \mean[h(X)/g(X)] \\
&= \int_{x} (h(x)/g(x))(g(x)/Z_g) \ d \mu \\
&= \int_{x} h(x)/Z_g \ d \mu = Z_h/Z_g > 0
\end{align*}
Here $\ind(p)$ is the usual indicator function that is 1 if its argument is true and 0 otherwise.  Since the chance of terminating at each step is a fixed positive number, the chance of running infinitely often is 0.

Now for local correctness.  Let $A$ be a set measurable with respect to $\mu$.  Then the chance that the output lies in $A$ is the chance that the initial draw of $X \in A$ times the chance we accept plus the chance that we reject the initial draw and the recursive call $Y$ falls in $A$ once we substitute in the true distribution for the recursive call.  In notation, if we call the output $W$,
\begin{align*}
\prob(W \in A) &= \prob(X \in A, U \leq h(X)/g(X)) + \prob(U > g(X)/h(X))\prob(Y \in A) \\
&= \frac{\int_{x \in A} g(x) \frac{h(x)}{g(x)} \ d\mu}{Z_g}+ \left(1 - \frac{Z_h}{Z_g}\right)\prob(Y \in A) \\
&= \frac{\int_{x \in A} h(x) \ d\mu}{Z_h} \cdot \frac{Z_h}{Z_g} + \left(1 - \frac{Z_h}{Z_g}\right)\prob(Y \in A) \\ 
&= \prob(Y \in A) \frac{Z_h}{Z_g} + \left(1 - \frac{Z_h}{Z_g}\right)\prob(Y \in A) \\
&= \prob(Y \in A),
\end{align*}
and so the output has the correct distribution.
\end{proof}

We call \(g\) the \emph{envelope density}. In adaptive AR, the envelope
density is refined based on the value of \(X\). For example, the
approach for Log-concave density functions of Wild and Gilks
\cite{wildg1993} works like this. In general, the envelope
function is itself modified as a function of the rejection sample.

\begin{pseudocode}{AR5$(g)$}
\gk & Draw $X$ using $g$ \\
\gk & Draw $U$ uniformly from $[0,1]$ \\
\gk & If $U < h(X)/g(X)$ \\
\gk & \quad Return $X$ \\
\gk & Else \\
\gk & \quad Let $g_a \leq g$ be a new envelope function for $h$ that depends on $X$ \\
\gk & \quad $Y \leftarrow \textbf{AR5}(g_a)$ \\
\gk & \quad Return $Y$ \\
\end{pseudocode}

\begin{lemma}
Suppose for each $g$ and rejected $X$, $g_a$ is still an envelope density of $h$.  Then the output of \textbf{AR4} is $X \sim h$.
\end{lemma}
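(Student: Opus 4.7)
The plan is to invoke the FTPS with the parameter space consisting of all valid envelope densities $g$ for $h$ (measurable $g$ with $g \geq h$ and $0 < Z_g < \infty$), and with the target family $\pi_g$ equal, for every such $g$, to the single distribution with normalized density $h/Z_h$. I must verify two things: termination in finite time and local correctness.

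For termination, the key observation is that every recursive call replaces the current envelope $g$ by a new envelope $g_a$ with $g_a \leq g$. Since this inequality is pointwise, $Z_{g_a} \leq Z_g$, so the per-level acceptance probability $Z_h/Z_{g_a}$ is at least $Z_h/Z_g$. By induction along any path down the recursion tree, every level accepts with probability at least $Z_h/Z_{g_0}$, where $g_0$ is the envelope passed to the outermost call. A geometric tail bound of the form $\prob(T > n) \leq (1 - Z_h/Z_{g_0})^n$ then gives finite recursion depth with probability 1, so the first hypothesis of the FTPS holds.

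For local correctness, I would replace the recursive call $\textbf{AR5}(g_a)$ by an oracle drawing from $\pi_{g_a} = h/Z_h$. The crucial observation is that this oracle output is independent of the envelope supplied to it, because the target distribution is $h/Z_h$ regardless of $g_a$. Consequently the probability computation for the output $W$ mirrors the one used for $\textbf{AR4}$: write $\prob(W \in A) = \prob(X \in A,\ U \leq h(X)/g(X)) + \prob(U > h(X)/g(X))\,\prob(Y \in A)$, substitute $\prob(Y \in A) = Z_h^{-1} \int_A h\, d\mu$, and collect terms to recover $\pi_g(A)$. The only temptation to resist is worrying about the adaptive modification of the envelope; but since local correctness only inspects one execution in which each recursive call has been replaced by a draw from the true target, the particular choice of $g_a$ never influences the output distribution. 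Both FTPS hypotheses then hold and the lemma follows.
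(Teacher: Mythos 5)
Your proposal is correct and follows essentially the same route as the paper: termination because the adaptive envelope $g_a \leq g$ can only raise the per-level acceptance probability (your explicit geometric bound just fleshes this out), and local correctness because once the recursive call is replaced by an oracle for $h/Z_h$, the choice of $g_a$ is irrelevant and the \textbf{AR4} computation carries over verbatim.
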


\begin{proof}
At each recursive call the chance of terminating can only stay the same or increase, and so the chance that we never terminate is still bounded above by 0.

For local correctness, we substitute the recursive call with a direct oracle, and so the change in the argument for the recursive call does not matter.  The same proof as in the previous lemma still works.
\end{proof}

\subsection{Coupling from the Past}\label{coupling-from-the-past}

Coupling from the Past (CFTP) was invented by Propp and Wilson
\cite{proppw1996} as a way to turn approximate sampling by a
Markov chain into perfect simulation. It has since become a mainstay of
perfect simulation algorithms. There have been many variants and
extensions, for instance Read-Once CFTP \cite{wilson2000}, clan of
ancestors \cite{ferrarifg2002}, Bounding chain CFTP \cite{huber2004a},
time-heterogeneous CFTP \cite{huber2006b}, and fractal time CFTP
\cite{huber2008b}.

Rather than repeat the entire evolution of the idea, here we move
straight to the most general version, which revolves around the notion
of a \emph{stationary update function}.

\begin{definition}
Say that $\phi:\Omega \times \mathcal{R} \rightarrow \Omega$ is a \emph{stationary update function with respect to $\pi$} if there is a probability distribution $\prob$ over $\mathcal{R}$ such that if $R \sim \prob$ and $X \sim \pi$ then $\phi(X,R) \sim \pi$ as well.
\end{definition}

We refer to \(X\) as the \emph{state} and \(R\) as the
\emph{random choices} that update the state. For instance, \(\phi\)
might encode taking one or more steps in a Metropolis-Hastings Markov
chain whose stationary distribution is \(\pi\).

For some random choices, the current state might be immaterial in
determining the next state. That is, suppose for all \(x \in \Omega\),
\(\phi(x,R) = y\). Then in this case we say the state has
\emph{completely coupled}.

With this in mind, we present here a very general version of CFTP. For a
given parameter \(\alpha\), suppose we use stationary update function
\(\phi_\alpha\). Moreover, let \(A_\alpha\) be any set such that \[
(\forall r \in A_\alpha)(\forall x_0, x_1 \in \Omega)(\phi_\alpha(x_0,r) = \phi_\alpha(x_1,r)).
\]

\begin{pseudocode}{CFTP$(\alpha)$}
\gk & Draw $R$ using $\prob_{\alpha}$ \\
\gk & If $R \in A_\alpha$  \\
\gk & \quad Let $x$ be any state \\
\gk & \quad Return $\phi_\alpha(x,R)$ \\
\gk & Else \\
\gk & \quad $Y \leftarrow \textbf{CFTP}(f(\alpha))$ \\
\gk & \quad Return $\phi_\alpha(Y,R)$ \\
\end{pseudocode}

\begin{lemma}  Suppose $\phi_\alpha$ is stationary with respect to $\pi$ for all parameters $\alpha$.
If a call to $\textbf{CFTP}(\alpha)$ terminates with probability 1, then the output comes from $\pi$.
\end{lemma}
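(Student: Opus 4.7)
The plan is to apply the Fundamental Theorem of Perfect Simulation directly. The first hypothesis, termination with probability 1, is already given, so the only work is to verify local correctness with respect to the constant family $\pi_\alpha = \pi$.

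To verify local correctness, I would replace the recursive call on line~6 with an oracle producing $Y \sim \pi$, and show that the resulting output $W$ satisfies $W \sim \pi$ as well. The key observation is that conditional on $R$, the output of the oracle-modified algorithm can be written uniformly as $\phi_\alpha(Z, R)$ for some $Z \sim \pi$ independent of $R$. Indeed, on the event $R \in A_\alpha$ the definition of $A_\alpha$ guarantees that $\phi_\alpha(x, R)$ is the same value for every choice of $x \in \Omega$, so we are free to take $x = Z$ for any $\pi$-distributed $Z$ independent of $R$ without changing the output. On the event $R \notin A_\alpha$ the oracle supplies exactly such a $Z$. Once this representation is in place, the stationarity hypothesis on $\phi_\alpha$ immediately yields $W = \phi_\alpha(Z, R) \sim \pi$, giving local correctness.

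With termination and local correctness both established, the FTPS applied to the constant family $\{\pi_\alpha = \pi : \alpha \in \mathcal{P}\}$ delivers the conclusion that the output of $\textbf{CFTP}(\alpha)$ is an exact draw from $\pi$.

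The main conceptual obstacle, and the step I would be most careful about, is the case $R \in A_\alpha$: here the algorithm does not perform a recursive call at all, so one might worry that local correctness only constrains the behavior on the complementary event. The resolution is to realize that local correctness only requires the output of the oracle-modified algorithm to have distribution $\pi$, and that on the complete-coupling event the output is a deterministic function of $R$ alone, so attaching a fictitious, independent $Z \sim \pi$ to that branch is free and unifies the two cases into a single application of stationarity. Everything else is routine bookkeeping.
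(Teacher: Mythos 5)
Your proposal is correct and follows essentially the same route as the paper: on the complete-coupling event $R \in A_\alpha$ the output is unchanged if the arbitrary state $x$ is replaced by the oracle's $\pi$-distributed draw, so in both branches the output is $\phi_\alpha(Z,R)$ with $Z \sim \pi$ independent of $R$, and stationarity plus the FTPS finishes the argument. The paper merely phrases this by splitting $\prob(W \in A)$ over the events $R \in A_\alpha$ and $R \notin A_\alpha$ and merging them, which is the same unification you describe.
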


\begin{proof}
Since we are assuming finite termination, we need only verify local correctness.  Let $A$ be a $\pi$-measurable set.  Let $W$ be the output of $\textbf{CFTP}(\alpha)$.  Let $x$ be any element of $\Omega$.
\begin{align*}
\prob(W \in A) &= \prob(\phi_\alpha(x,R) \in A, R \in A_\alpha) + \prob(\phi_\alpha(Y,R) \in A, R \notin A_\alpha) \\
&= \prob(\phi_\alpha(Y,R) \in A, R \in A_\alpha) + \prob(\phi_\alpha(Y,R) \in A, R \notin A_\alpha) \\
&= \prob(\phi_\alpha(Y,R) \in A) \\
&= \pi(A),
\end{align*}
since $\phi_\alpha$ is a stationary update function.
\end{proof}

\subsection{Bernoulli Factories}\label{bernoulli-factories}

Bernoulli factories were introduced by Asmussen et. al.
\cite{asmussengt1992} as part of an algorithm for generating
samples exactly from the stationary distribution of a regenerative
Markov process. The idea is as follows.

Suppose you have access to a stream of independent, identically
distributed (iid) Bernoulli random variables \(X_1,X_2,\ldots\) with
unknown parameter \(p\). So \(\prob(X_i = 1) = p\),
\(\prob(X_i = 0) = 1 - p\). Write \(X_i \sim \berndist(p)\). Now suppose
that we wish to construct a Bernoulli random variable with a parameter
that is a function of \(p\) using a random number of \(\{X_i\}\). That
is, we want \(Y \sim \berndist(f(p))\) for a known function \(p\) where
\(Y = f_{\text{BF}}(X_1,\ldots,X_T)\) for some function
\(f_{\text{BF}}\) and stopping time \(T\). Such an algorithm is a
\emph{Bernoulli factory}.

For instance, to draw \(Y \sim \berndist(p(1-p))\), just let
\(Y = X_1(1 - X_2)\). For the application in \cite{asmussengt1992},
they needed to be able to draw \(W \sim \berndist(Cp)\) where \(C\) is a
known constant. This problem, although simple to state, turns out to be
surprisingly difficult. Nacu and Peres \cite{nacup2005} showed
that for any function that is analytic and bounded away from 1, it
suffices to have a Bernoulli factory for \(2p\).

The first polynomial expected time algorithm for this problem was given
in \cite{huber2016a}. Here we present a new version of the algorithm
that is slightly simpler to implement and analyze and takes advantage of
the FTPS.

To construct this linear factory, it helps to have a factory for
\(Cp / (1 + Cp)\) available.

\begin{pseudocode}{BF1$(C)$}
\gk & Draw $B$ as $\berndist(C/[1 + C])$ \\
\gk & If $B = 0$ \\
\gk & \quad Return 0 \\
\gk & Else \\
\gk & \quad Draw $X$ as $\berndist(p)$ \\
\gk & \quad If $X = 1$ return $1$ \\
\gk & \quad Else \\
\gk & \quad \quad Return \textbf{BF1}$(C)$ \\
\end{pseudocode}

\begin{lemma}
The output of \textbf{BF1}$(C)$ is $\berndist(Cp/(1+Cp))$.
\end{lemma}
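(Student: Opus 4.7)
The plan is to apply the FTPS directly: verify that \textbf{BF1}$(C)$ terminates with probability 1 and that it is locally correct with respect to the family $\{\pi_C = \berndist(Cp/(1+Cp)) : C \geq 0\}$ (with $p$ fixed and viewed as the underlying Bernoulli source).

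For termination, I would observe that the algorithm makes a recursive call only on the event $\{B = 1, X = 0\}$, which has probability $(C/(1+C))(1-p)$. Since $B$ and $X$ are drawn independently at each level of recursion, the number of recursive calls is stochastically dominated by a geometric random variable with success probability $1 - (C/(1+C))(1-p) = (1 + Cp)/(1+C) > 0$, so the algorithm terminates in finite time with probability 1.

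For local correctness, I would replace the recursive call on line 8 by an oracle $Y \sim \berndist(Cp/(1+Cp))$ and compute the probability that the output $W$ equals $1$ by conditioning on the pair $(B,X)$. Only two disjoint events contribute: $\{B=1,X=1\}$, which immediately returns $1$, and $\{B=1,X=0\}$, which returns the oracle's value. This gives
\begin{align*}
\prob(W = 1) &= \frac{C}{1+C}\cdot p + \frac{C}{1+C}\cdot(1-p)\cdot\frac{Cp}{1+Cp}.
\end{align*}
The routine algebraic step is to factor out $Cp/(1+C)$ and combine $(1+Cp) + C(1-p) = 1 + C$ in the numerator, which collapses the expression to $Cp/(1+Cp)$, matching $\pi_C$.

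Having verified both hypotheses, the FTPS immediately yields that the output of \textbf{BF1}$(C)$ has distribution $\berndist(Cp/(1+Cp))$. I expect the only mildly tricky part to be the algebraic simplification in the local correctness step; everything else is essentially bookkeeping of the recursion's two-case structure.
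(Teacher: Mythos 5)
Your proposal is correct and follows essentially the same route as the paper: verify the two FTPS hypotheses, with termination coming from a uniformly positive chance of no recursion at each level, and local correctness from replacing the line-8 call with an oracle and computing $\prob(W=1) = (C/(1+C))[p + (1-p)Cp/(1+Cp)] = Cp/(1+Cp)$. Your termination bound is in fact slightly sharper than the paper's (which only uses the event $B=0$), but the argument is the same in substance.
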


\begin{proof}
If $B = 0$ the algorithm does not call itself recursively, therefore the chance that the algorithm does not terminate in finite time is 0.

Let $W$ be the output of the algorithm.  To show local correctness, we assume the final call in line 8 returns $Y \sim \berndist(Cp/(1 + Cp))$.  Then $W$ must be in $\{0,1\}$, and 
\begin{align*}
\prob(W = 1) &= \prob(B = 1, X = 1) + \prob(B = 1, X = 0, Y = 1) \\
&= \prob(B = 1)[\prob(X = 1) + \prob(X = 0)\prob(Y = 1)] \\
&= (C/(1+C))[p + (1 - p) Cp / (1 + Cp)] = Cp / (1 + Cp),
\end{align*}
giving local correctness.
\end{proof}

Rather than just build a \(Cp\) Bernoulli factory, we will build a
\((Cp)^i\) Bernoulli factory where \(C \geq 1\) and
\(i \in \{0, 1, \ldots\}\). For this factory, we suppose that we know
\(\epsilon > 0\) such that \(Cp \leq 1 - \epsilon\). If no such
\(\epsilon\) exists, then it is impossible to build such an algorithm,
see \cite{huber2016a} for the proof.

\begin{pseudocode}{BF2$(C, i, \epsilon)$}
\gk & If $i = 0$ \\
\gk & \quad Return 1 \\
\gk & If $i > 3.55 / \epsilon$ \\
\gk & \quad Let $\beta = (1 - \epsilon / 2) / (1 - \epsilon)$ \\ 
\gk & \quad Draw $B_1$ as $\berndist(\beta^{-i})$ \\
\gk & \quad If $B_1 = 0$, return 0 \\
\gk & \quad Else return $\textbf{BF2}(\beta C, i, \epsilon/2)$ \\
\gk & Else \\
\gk & \quad Let $B_2 \leftarrow \textbf{BF1}(C)$ \\
\gk & \quad Return $\textbf{BF2}(C, i + 1 - 2 B_2, \epsilon)$ \\
\end{pseudocode}

\begin{lemma}
For $Cp < 1 - \epsilon$, $i \in \{0, 1, 2,\ldots\}$, the output of $\textbf{BF2}(C, i, \epsilon)$ is $\berndist([Cp]^i)$.
\end{lemma}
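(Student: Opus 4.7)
The plan is to invoke the FTPS, reducing the problem to (i) finite-time termination with probability $1$ and (ii) local correctness. Local correctness is a tidy case analysis across the three branches of $\textbf{BF2}$; the genuine obstacle is termination, because the recursion on $i$ can go up as well as down.

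For local correctness I would handle the branches in turn. When $i = 0$, the algorithm returns $1 = \berndist((Cp)^0)$. When $i > 3.55/\epsilon$, first observe that the parameters $(\beta C, i, \epsilon/2)$ are admissible: with $\beta = (1 - \epsilon/2)/(1 - \epsilon)$ and $Cp < 1 - \epsilon$, we get $\beta C p < 1 - \epsilon/2$. Hence the oracle replacing the recursive call returns $\berndist((\beta Cp)^i)$, and $\prob(W = 1) = \beta^{-i} (\beta Cp)^i = (Cp)^i$. For $1 \leq i \leq 3.55/\epsilon$, the previous lemma guarantees $B_2 \sim \berndist(Cp/(1 + Cp))$, and with $\textbf{BF2}(C, i \pm 1, \epsilon)$ treated as oracles,
\begin{align*}
\prob(W = 1) &= \frac{Cp}{1 + Cp}(Cp)^{i - 1} + \frac{1}{1 + Cp}(Cp)^{i + 1} \\
&= \frac{(Cp)^i(1 + Cp)}{1 + Cp} = (Cp)^i.
\end{align*}

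Termination is the hard part. Branch 3 makes $i$ perform a biased walk whose down-step probability $Cp/(1 + Cp) < 1/2$ (since $Cp < 1$), so $i$ drifts upward and, with probability $1$, repeatedly crosses the current branch-2 threshold. Each branch-2 visit halts outright with probability $1 - \beta^{-i}$; a Taylor bound $\log \beta \geq \epsilon/2$ gives $\beta^{-i} \leq e^{-i \epsilon/2} \leq e^{-1.775}$ whenever $i > 3.55/\epsilon$. A minor wrinkle is that a non-halting branch-2 visit updates $(C, \epsilon) \to (\beta C, \epsilon/2)$, doubling the threshold, but the new parameters still satisfy $\beta C p < 1$, so the drift of the subsequent walk remains upward. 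My plan is therefore to couple $i$ with a positive-drift random walk to show that each successive threshold is reached almost surely, so branch 2 is visited infinitely often; then a Borel--Cantelli argument applied to the uniform lower bound $1 - e^{-1.775}$ on the halting probability per visit yields finite termination with probability $1$. The FTPS then delivers the lemma.
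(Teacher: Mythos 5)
Your proposal is correct and follows essentially the same route as the paper: the identical three-case verification of local correctness (with the same computations, including the check that $\beta C p < 1 - \epsilon/2$ and the bound $\beta^{-i} \leq e^{-3.55/2}$), and termination via the upward-drifting walk on $i$ combined with the uniform halting probability at each boundary visit. Your treatment of termination is in fact slightly more explicit than the paper's about the threshold doubling when $(C,\epsilon) \to (\beta C, \epsilon/2)$, but the underlying argument is the same.
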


\begin{proof}
Let $W$ denote the output of the call to the algorithm.
When $i = 0$ the result is true since $W = 1$.  If $i > 3.55 / \epsilon$, then to show local correctness, consider $\prob(W = 1)$.

If $Cp < 1 - \epsilon$, then $C \beta p = C(1-\epsilon/2)p/(1 - \epsilon) \leq 1 - \epsilon / 2$.  So we assume the recursive call in line 6 comes from an oracle, and hence returns $\berndist([\beta C p]^i) \sim \berndist(\beta^{i} [Cp]^i)$.  Then the probability $W = 1$ is $(\beta^{-i})(\beta^{i}[Cp]^i) = [Cp]^i$.

Now suppose $i > 0$ and $i < 3.55 / \epsilon$.  Then $W$ is the result of line 10, which depends on $B_2$.  So
\begin{align*}
\prob(W = 1) &= \prob(B_2 = 1)(Cp)^{i - 1} + \prob(B_2 = 0)(Cp)^{i + 1} \\
&= (Cp)^{i-1}\left[\frac{Cp}{1 + Cp} + \frac{1}{1 + Cp} (Cp)^2 \right] = (Cp)^i.
\end{align*}
Hence the algorithm is locally correct.  

If $i > 3.55 / \epsilon$, then $\beta^{-i} \leq \exp(-3.55 / 2)$.  So every time that $i$ hits 0 or exceeds $3.55 / \epsilon$ there is a $1 - \exp(-3.55 / 2)$ chance of stopping.  When $i$ falls inside those extremes, there is at least a $1/2$ chance that $i$ increases by 1, hence there will be a finite number of steps with probability 1 until $i$ reaches 0 or exceeds $3.55 / \epsilon$, giving a finite number of steps with probability 1 until termination.
\end{proof}

See the appendix for an evaluation of the expected running time.

\subsection{Other protocols}\label{other-protocols}

There are many other protocols for building perfect simulation
algorithms that can employ the FTPS. For instance, the
\emph{design property} of the Randomness Recycler
\cite{huber2000b, huber2015b} gives local correctness for that
approach. Sink popping algorithms \cite{proppw1998, cohnpp2002} can
also be written recursively, which makes their proof of correctness
amenable to use of the FTPS.

Similarly, Partial Rejection Sampling \cite{guojl2017} can be
implemented recursively, allowing the FTPS to be used to show
correctness.

\section{Models of Computation}\label{SEC:models}

Until now we have been looking at these algorithms through the lens of
floating point operations that can be carried out exactly. What changes
when we move to a Turing machine model of computation?

When dealing with randomized algorithms, we can use a Probabilistic
Turing Machine (PTM), which is essentially a Turing Machine that can
access a tape with an iid stream of \(\berndist(1/2)\) random variates.
In this framework, we cannot even simulate exactly a single \(U\)
uniform over \([0,1]\). Instead, for any possible error \(\delta\), we
can compute a result that falls within distance \(\delta\) of the actual
random variable.

When we call the algorithm for a PTM, we must specify the error
tolerance that we are willing to accept. Presumably we are only using
Turing computable functions. Therefore, for a potential recursive call,
it is possible to determine what the error for the recursive call should
be in order to obtain the desired error for the original call. The FTPS
guarantees that the exact algorithm has the correct distribution, and so
the actual computed output will come within the target error of a truly
exact draw.

Of course, none of this is done in practice since usually machine
epsilon is typically small enough to give the desired error for the
original call.

\section{\texorpdfstring{Appendix: Expected running time of
\(\textbf{BF2}\)}{Appendix: Expected running time of \textbackslash{}textbf\{BF2\}}}\label{appendix-expected-running-time-of-textbfbf2}

We first consider \(\textbf{BF1}\).

\begin{lemma}
The expected number of Bernoulli draws needed for $\textbf{BF1}$ is $C / (1 + Cp)$.
\end{lemma}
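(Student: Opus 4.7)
The plan is to set up a one-step recursion for the expected count and solve. Let $N$ denote the number of $\berndist(p)$ draws consumed by a call to $\textbf{BF1}(C)$. Note that the initial draw of $B$ in line~1 is an internal coin flip with known parameter $C/(1+C)$, not a draw from the $\berndist(p)$ stream, so it does not contribute to $N$. The only line that consumes a $\berndist(p)$ draw is line~5.

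First I would condition on $B$: with probability $1/(1+C)$ we return $0$ at line~3, consuming zero $\berndist(p)$ draws; with probability $C/(1+C)$ we proceed to line~5 and draw $X$, contributing $1$ to $N$. Conditional on reaching line~5, with probability $p$ we return $1$ at line~6, and with probability $1-p$ we make a recursive call $\textbf{BF1}(C)$ at line~8 which, by the probabilistic structure of the algorithm, consumes an independent copy $N'$ of $N$ additional $\berndist(p)$ draws. This gives the first-step recursion
\[
\mean[N] \;=\; \frac{C}{1+C}\Bigl(1 + (1-p)\,\mean[N]\Bigr).
\]
Solving this linear equation for $\mean[N]$ yields
\[
\mean[N] \;=\; \frac{C/(1+C)}{1 - (C/(1+C))(1-p)} \;=\; \frac{C}{1+Cp},
\]
after simplifying the denominator to $(1+Cp)/(1+C)$.

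The one subtlety is justifying the manipulation of $\mean[N]$ before knowing it is finite. I would handle this by noting that each iteration of the underlying repeat-loop interpretation of $\textbf{BF1}$ recurses with probability exactly $(C/(1+C))(1-p)$, so $N$ is stochastically dominated by a geometric random variable with success probability $(1+Cp)/(1+C) > 0$, giving $\mean[N] < \infty$. Once finiteness is in hand, the recursion above is a genuine linear equation in $\mean[N]$ and its solution is the claimed value. There is no real obstacle here; the only care needed is distinguishing the internal $B$-flip from draws of the unknown-bias coin so that $N$ counts the intended quantity.
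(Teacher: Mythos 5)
Your proposal is correct and follows essentially the same route as the paper: both set up the one-step recursion $\mean[T] = \frac{C}{1+C}\bigl[1 + (1-p)\mean[T]\bigr]$ and solve it to get $C/(1+Cp)$, after first noting that the per-call probability of not recursing is bounded away from zero so the expectation is finite. Your geometric-domination justification of finiteness and your explicit remark that the $B$-flip is not a $\berndist(p)$ draw are just slightly more detailed versions of what the paper states.
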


\begin{proof}
Let $T$ be the number of Bernoulli draws used by the algorithm.  Each call to the algorithm uses either 0 or 1 draws, and then either calls itself recursively or does not.  At each step there is at least a $1/(1+C)$ chance of not recursively calling the algorithm, and so $\mean[T]$ is finite.  This gives rise to the following recursion
\[
\mean[T] = \frac{C}{1 + C}\left[1 + (1 - p) \mean[T] \right],
\]
which solves to $\mean[T] = C / (1 + Cp)$.
\end{proof}

When we call \(\textbf{BF1}\) from \(\textbf{BF2}\) we use the result to
change the number of \(Cp\) coins we need to flip, either increasing by
one with probability \(1/(1+Cp)\) or decreasing it by one with
probability \(Cp/(1+Cp)\). This is known as the
\emph{gambler's ruin problem}. It has been well studied (see, for
instance, \cite{resnick1992}.)

The gambler's ruin problem is a Markov chain where, the state moves from
\(i\) to \(i + 1\) with probability \(r\), and to \(i - 1\) with
probability \(1-r\). The following result is well-known.

\begin{lemma}
Let the initial state be $1$ and suppose the gambler's ruin Markov chain stops when the state reaches $0$ or $n$.  Let $T$ denote the number of steps that are taken until this is reached.  Then for $r > 1/2$,
\[
\mean[T] = \frac{n}{2r - 1} \cdot \frac{1 - \left((1 - r) / r\right)}{1 - \left((1 - r) / r\right)^n} - \frac{1}{2r - 1}.
\]
Also, for any starting state $i \in \{1, 2, \ldots, n\}$,
\[
\mean[T] \leq \frac{n - i}{2r - 1}.
\]
\end{lemma}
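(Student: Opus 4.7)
The plan is to treat this as a standard first-step analysis for the hitting time of the gambler's ruin Markov chain. Let $f(i) = \mean[T \mid X_0 = i]$ for $i \in \{0,1,\ldots,n\}$, with boundary conditions $f(0) = f(n) = 0$. Conditioning on the first step gives the linear difference equation
\[
f(i) = 1 + r f(i+1) + (1-r) f(i-1), \qquad 1 \leq i \leq n-1.
\]
To justify that this recursion actually pins down $\mean[T]$, I would first note that for $r > 1/2$ the chain has positive drift away from $0$ and the probability of hitting $n$ before $0$ from any starting state is bounded below by a positive constant; combined with a geometric tail for $T$, the expectations are all finite and the first-step recursion holds.

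Next I would solve the recursion. Rewriting as $r f(i+1) - f(i) + (1-r) f(i-1) = -1$, the characteristic polynomial $r x^2 - x + (1-r)$ has roots $1$ and $q := (1-r)/r$, which are distinct because $r \neq 1/2$. A particular solution of the form $f_p(i) = c i$ works with $c = -1/(2r-1)$, since substitution yields $1 + c(2r-1) = 0$. Hence
\[
f(i) = A + B q^i - \frac{i}{2r - 1}.
\]
The boundary condition $f(0) = 0$ forces $B = -A$, and $f(n) = 0$ then gives $A(1 - q^n) = n/(2r-1)$. Substituting back yields
\[
f(i) = \frac{n}{2r-1} \cdot \frac{1 - q^i}{1 - q^n} - \frac{i}{2r-1},
\]
and specialising to $i = 1$ recovers the first displayed formula of the lemma.

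For the second (bound) part of the lemma, I would simply observe that since $r > 1/2$ we have $0 < q < 1$, so $q^i \geq q^n$ whenever $i \leq n$, whence $(1-q^i)/(1-q^n) \leq 1$. Multiplying through gives
\[
f(i) \leq \frac{n}{2r-1} - \frac{i}{2r-1} = \frac{n - i}{2r - 1},
\]
which is the required bound.

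I do not expect a genuine obstacle here; the proof is essentially a textbook exercise. The only subtlety is the finiteness of $\mean[T]$, which justifies using the first-step recursion rather than merely verifying that the explicit formula solves it; I would handle this with the standard geometric-tail argument sketched above (or alternatively by first showing the formula yields a valid harmonic-type bound and invoking optional stopping, but the direct route is cleaner).
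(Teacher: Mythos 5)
Your proof is correct, but note that the paper does not actually prove this lemma: it states the gambler's ruin result as ``well-known'' and simply cites Resnick's \emph{Adventures in Stochastic Processes}, so there is no internal argument to compare against. What you have written is the standard textbook derivation, and the details check out: the first-step recursion $f(i) = 1 + r f(i+1) + (1-r) f(i-1)$ with $f(0)=f(n)=0$, the characteristic roots $1$ and $q=(1-r)/r$, the particular solution $-i/(2r-1)$, and the resulting formula $f(i) = \frac{n}{2r-1}\cdot\frac{1-q^i}{1-q^n} - \frac{i}{2r-1}$, which at $i=1$ is exactly the displayed expression, and which immediately yields the bound $f(i) \le (n-i)/(2r-1)$ since $0<q<1$ gives $(1-q^i)/(1-q^n)\le 1$. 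Your attention to the finiteness of $\mean[T]$ is the right instinct (it is what licenses the first-step equations and the uniqueness of the bounded solution); the cleanest phrasing of the tail bound is that from any interior state the chain makes $n$ consecutive up-moves with probability at least $r^n>0$, forcing absorption, so $\prob(T>kn)\le(1-r^n)^k$ decays geometrically --- your sketch via the positive probability of hitting $n$ before $0$ should be replaced or supplemented by this block argument, but that is a presentational point, not a gap. In short: the proposal supplies a complete, self-contained proof of a fact the paper outsources to a reference, using the approach that reference would use.
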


Using this, we can upper bound the running time of
\(\textbf{BF2}(C, 1, \epsilon)\).

\begin{lemma}
Let $T$ be the number of Bernoulli draws in $\textbf{BF2}(C, 1, \epsilon)$.  Then
\[
\mean[T] \leq \frac{C(1 + \epsilon^{-1})}{(1 - \exp(-3.55))(1 - 2\exp(-3.55/2))} \]
\end{lemma}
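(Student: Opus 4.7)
The plan is to decompose the execution of $\textbf{BF2}(C, 1, \epsilon)$ into \emph{epochs} indexed by the number of times $\epsilon$ has been halved via the high-regime recursion in line 7. After $k$ such halvings the parameter pair has evolved to $(C_k, \epsilon_k)$ where $\epsilon_k = \epsilon/2^k$ and $C_k = \beta_0 \beta_1 \cdots \beta_{k-1} C$. Since $\beta_j = (1-\epsilon_j/2)/(1-\epsilon_j) = (1-\epsilon_{j+1})/(1-\epsilon_j)$, this last product telescopes, giving $C_k = C(1-\epsilon_k)/(1-\epsilon)$, uniformly bounded by $C/(1-\epsilon)$, so the per-call cost of $\textbf{BF1}(C_k)$ is uniformly controlled.

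First I would bound the probability of reaching epoch $k$. Each transition from epoch $j$ to epoch $j+1$ requires $B_1 = 1$ in line 5 with some $i > 3.55/\epsilon_j$, which happens with probability $\beta_j^{-i} \leq \exp(-3.55/2)$ (the bound already established in the correctness proof). Hence $\prob(\text{reach epoch } k) \leq \exp(-3.55 k/2)$, producing geometric decay across epochs.

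Next I would bound the conditional expected number of $\berndist(p)$ draws within epoch $k$. During the epoch the state $i$ performs a gambler's-ruin walk on $\{0, 1, \ldots, n_k\}$ with $n_k = \lfloor 3.55/\epsilon_k \rfloor + 1$ and up-probability $r = 1/(1+C_kp)$; each walk step is one call to $\textbf{BF1}(C_k)$ with expected $C_k/(1+C_kp)$ draws by the earlier lemma. The key simplification is that when the exact gambler's-ruin formula is combined with the BF1 cost via Wald's identity, the factor $(1+C_kp)$ in the walk-length bound cancels the $(1+C_kp)^{-1}$ from BF1, leaving per-epoch expected draws at most a constant multiple of $C_k n_k$. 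Here one uses that $\epsilon_k n_k > 3.55$ forces $(C_kp)^{n_k} \leq (1-\epsilon_k)^{n_k} \leq \exp(-3.55)$, so the denominator $1-(C_kp)^{n_k}$ is bounded below by $1 - \exp(-3.55)$.

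Finally, summing over epochs gives $\mean[T] \leq \sum_{k=0}^{\infty} \exp(-3.55 k/2) \cdot (\text{const}) \cdot C_k n_k$. Substituting $n_k \leq 3.55 \cdot 2^k/\epsilon + 1$ splits this into two geometric series: one with ratio $2 \exp(-3.55/2)$, producing the factor $1/(1 - 2\exp(-3.55/2))$ which gives the $\epsilon^{-1}$ piece; the other with ratio $\exp(-3.55/2)$, which gives the $O(1)$ piece. The convergence of the first series is exactly the condition that $1 - 2\exp(-3.55/2)$ in the denominator of the claimed bound is positive. The main obstacle will be carrying the $(1+C_kp)$ cancellation cleanly through epochs $k \geq 1$, where the walk begins at $n_{k-1}$ rather than at $1$; this will require using the exact gambler's-ruin formula and the fact that $(C_kp)^{n_{k-1}} \leq \exp(-3.55/2)$, rather than merely the $(n-i)/(2r-1)$ upper bound, so that the final dependence remains $C/\epsilon$ and does not degrade to $C/\epsilon^2$.
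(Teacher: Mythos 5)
Your epoch decomposition, the telescoping identity for $C_k$, and your treatment of epoch $0$ all match the paper: starting the walk from $i=1$, the exact absorption-time formula carries a factor $1-\rho^1=1-Cp$ in the numerator that cancels the $1/(2r-1)$, giving $\mean[T_0]\le n_0\,C/(1-\exp(-3.55))$. The gap is in epochs $k\ge 1$. The exact formula from a general start $i$ is
\[
\frac{1}{2r-1}\left(n\,\frac{1-\rho^{i}}{1-\rho^{n}}-i\right),\qquad \rho=C_kp,\quad \frac{1}{2r-1}=\frac{1+C_kp}{1-C_kp},
\]
and when the walk starts at $i=n_{k-1}\approx n_k/2$ the parenthesized quantity is of order $n_k$ (with $\rho^{i}\approx\exp(-3.55/2)$ and $\rho^{n}\approx\exp(-3.55)$ it is about $0.35\,n_k$); your bound $(C_kp)^{n_{k-1}}\le\exp(-3.55/2)$ only controls the ratio $(1-\rho^i)/(1-\rho^n)$ by a constant and does not make the bracket small. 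The dangerous factor is therefore not the harmless $1+C_kp\in[1,2]$ but the surviving $1/(1-C_kp)$, which in the worst case $C_kp=1-\epsilon_k$ equals $2^{k}/\epsilon$ up to constants. The conditional per-epoch cost is then $\Theta(4^kC/\epsilon^2)$, not $O(C_kn_k)$; multiplied by your entry probability $\exp(-3.55k/2)$ and summed, this yields $O(C/\epsilon^2)$, so the claimed $C(1+\epsilon^{-1})$ bound does not follow from your outline.

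The missing ingredient is the paper's bound on the probability of ever leaving epoch $0$ upward, obtained from local correctness: since the output of $\textbf{BF2}(C,1,\epsilon)$ is $\berndist(Cp)$, conditioning on whether the level-$0$ walk hits $0$ or $n_0$ first gives $Cp=(1-q)\cdot 1+q\,(Cp)^{n_0}$, hence $q=(1-Cp)/(1-(Cp)^{n_0})\le(1-Cp)/(1-\exp(-3.55))$. This factor of $1-Cp$ multiplies the entry probability of every epoch $k\ge1$ and is what cancels the $1/(1-C_kp)$ in the per-epoch walk length (up to the bounded ratio $(1-Cp)/(1-C_kp)$), restoring the $C/\epsilon$ scaling. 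Once that factor is in place, the crude bound $(n-i)/(2r-1)$ suffices for $k\ge1$, and the exact gambler's-ruin formula is needed only for epoch $0$.
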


\begin{proof}
Here $r = 1 / (1 + Cp)$, so $1/(2r - 1) = (1 - Cp) / (1 + Cp)$ and $(1 - r) / r = Cp$.
Initially we begin with $(Cp)^i$ and stop when $i = \lceil 3.55/\epsilon \rceil$.  So from the previous lemma the number of draws $T_0$ obeys
\[
\mean[T_0] \leq \left\lceil \frac{3.55}{\epsilon} \right\rceil \cdot \frac{1 + Cp}{1 - Cp} \cdot \frac{1 - Cp}{1 - (Cp)^{3.55/\epsilon}} \cdot \frac{C}{1 + Cp} = \left \lceil \frac{3.55}{\epsilon} \right \rceil \cdot \frac{C}{1 - (Cp)^{3.55/\epsilon}}. 
\]
Using $1 + x \leq \exp(x)$ and $Cp \leq 1 - \epsilon$ gives
\[
\mean[T_0] \leq \left\lceil \frac{3.55}{\epsilon} \right\rceil \frac{C}{1 - \exp(-3.55)}.
\]

If we fail to exit then we change from $i = \lceil 3.55 / \epsilon \rceil$ to $\lceil 3.55 \cdot 2 / \epsilon \rceil$, and so on.  Consider the chance that we need to make draws that move our problem from $i = \lceil 3.55 \cdot 2^k/\epsilon \rceil$ to $i = \lceil 3.55 \cdot 2^{k+1}/\epsilon \rceil$.  The difference between these two numbers is at most $\lceil 3.55 \cdot 2^k/\epsilon \rceil$. So the expected number of steps needed by the previous lemma is
\[
\mean[T_k] \leq \lceil 3.55 \cdot 2^k/\epsilon \rceil \frac{1 - Cp}{1 + Cp} \cdot \frac{C}{1 - Cp} = \frac{\lceil 3.55 \cdot 2^k/\epsilon \rceil C}{1 - Cp}.
\]

So now consider the question of the probability that we need to make these moves in the first place.  Let $q$ denote the probability that in the initial run, $i > 2/\epsilon$ is reached before $i = 0$.  Then by local correctness,
\[
Cp = (1 - q) \cdot 1 + q \cdot (Cp)^i \Rightarrow q = (1 - Cp) / (1 - (Cp)^i) \leq 
  (1 - Cp) / (1 - \exp(-3.55)).
\]

Furthermore, there is only a $\beta^{-i}$ chance that we do not return a 0 and enter the $k$th level of recursion given that we are already at the $k - 1$ level of recursion.  Note
\[
\beta^{-i} \leq \left( \frac{1 - \epsilon / 2}{1 - \epsilon} \right)^{-3.55/\epsilon} \leq \exp(-3.55/2)
\]

Combining, the chance of entering level of recursion $k$ is at most
\[
\exp(-3.55 k / 2) (1 - Cp) / (1 - \exp(-3.55))
\]

This makes (by the Monotone Convergence Theorem) the upper bound on the expected number of steps
\[
\mean[T] \leq \left \lceil \frac{3.55}{\epsilon} \right \rceil \frac{C}{1 - \exp(-3.55)} + \sum_{k=1}^\infty \exp(-3.55 k / 2) \left \lceil \frac{3.55\cdot 2^k}{\epsilon} \right \rceil \frac{C}{1 - \exp(-3.55)}
 \]
 Using $\lceil x \rceil \leq x + 1$ together with standard facts about geometric series then completes the proof.
\end{proof}

The constant of \(3.55\) was found by numerically optimizing the
constant in front of the \(C(1 + \epsilon^{-1})\) term to minimize it.
In \cite{huber2016a}, it was shown that any algorithm of this type must
take \(\Omega(\epsilon^{-1} C)\) steps on average and so the best that
we can hope for is to improve the constant on the bound on the running
time. Note that \([(1-\exp(-3.55))(1-2\exp(-3.55/2))] \leq 5.53\). The
algorithm presented in \cite{huber2016a} was shown to take at most
\(9.5 \epsilon^{-1}C\) so this represents a marked improvement in the
constant factor.

\section{Acknowledgements}

This work supported by U.S. National Science Foundation grant DMS-1418495.  

\renewcommand\refname{References}


\end{document}